\newtheorem{thm}{Theorem}
\newtheorem{lem}{Lemma}
\newtheorem{cor}{Corollary}
\newtheorem{defn}{Definition}
\newtheorem{exmp}{Example}
\def\BibTeX{{\rm B\kern-.05em{\sc i\kern-.025em b}\kern-.08em
    T\kern-.1667em\lower.7ex\hbox{E}\kern-.125emX}}
\begin{document}

\title{Wireless MapReduce Arrays for \\Coded Distributed Computing }

\author{\IEEEauthorblockN{~Elizabath Peter, K. K. Krishnan Namboodiri, and B. Sundar Rajan \\}
	\IEEEauthorblockA{Department of Electrical Communication Engineering, IISc
		Bangalore, India \\
		E-mail: \{elizabathp, krishnank, bsrajan\}@iisc.ac.in}
       \thanks{This work was supported partly by the Science and Engineering Research Board (SERB) of Department of Science and Technology (DST), Government of India, through J.C Bose National Fellowship to Prof. B. Sundar Rajan, and by the Ministry of Human Resource Development (MHRD), Government of India through Prime Minister's Research Fellowship to Elizabath Peter and K. K. Krishnan Namboodiri.}

}

\maketitle

\begin{abstract}
We consider a wireless distributed computing system based on the MapReduce framework, which consists of three phases: \textit{Map}, \textit{Shuffle}, and \textit{Reduce}. The system consists of a set of distributed nodes assigned to compute arbitrary output functions depending on a file library. The computation of the output functions is decomposed into Map and Reduce functions, and the Shuffle phase, which involves the data exchange, links the two. In our model, the Shuffle phase communication happens over a  full-duplex wireless interference channel. For this setting, a coded wireless MapReduce distributed computing scheme exists in the literature, achieving optimal performance under one-shot linear schemes. However, the scheme requires the number of input files to be very large, growing exponentially with the number of nodes. We present schemes that require the number of files to be in the order of  the number of nodes and achieve the same performance as the existing scheme. The schemes are obtained by designing a structure called wireless MapReduce array that succinctly represents all three phases in a single array. The wireless MapReduce arrays can also be obtained from the extended placement delivery arrays known for multi-antenna coded caching schemes.

\end{abstract}



\section{Introduction}
In the era of data-intensive applications, distributed computing systems are a necessity. 
In the popular MapReduce distributed computing framework \cite{DeG}, the computation is split into two stages: \textit{Map} and \textit{Reduce}. In the Map phase, the file library is distributed across a set of computing nodes to generate intermediate values according to the assigned map functions. These intermediate values are then exchanged among the nodes (called \textit{Shuffle} phase) to calculate the output functions distributedly using the Reduce functions. Coded MapReduce introduced in \cite{LMA,LMYA} utilizes the redundancy in the map computation phase to create coded multicasting opportunities in the Shuffle phase, thereby reducing the communication overhead or latency compared to uncoded data shuffling. The exact computation-communication tradeoff is characterized in \cite{LMYA}. The MapReduce distributed computing has been explored in several settings, such as considering different channel models in the Shuffle  phase \cite{LYMA,LCW}, identifying coded MapReduce schemes that require less number of files as input \cite{YTC,KoR,RaK,AgK,PLE}, designing Reduce functions that depend only on a subset of the files \cite{PRPA}, etc.

We study the wireless MapReduce distributed computing framework considered in \cite{LCW}, where the communication in the Shuffle phase happens over a full-duplex wireless interference channel. The MapReduce problem is defined by four parameters:  $K$, $Q$, $N$, and $r$, where $K$ is the number of computing nodes, $Q$ is the number of output functions, $N$ is the number of input files, and $r$ is the computation load in the Map phase denoting the average number of nodes that map a file. The computation of the $Q$ output functions is split into two stages: Map and Reduce. Each node computes a distinct set of output functions, and to perform the corresponding Reduce operations, intermediate values of all those functions on the $N$ files are required. Some of the intermediate values are available locally with each node from the Map phase, and the missing ones are obtained from the exchange in the Shuffle phase. 
The time required to transmit an intermediate value by a node, assuming no interference and successful decoding at the intended node, is taken to be $1$ unit of time. Then, the total time taken to exchange all the intermediate values in the Shuffle phase normalized by $NQ$ is called normalized delivery time. Under the constraint of one-shot linear transmissions in the Shuffle phase, the optimal normalized delivery time of a full-duplex wireless MapReduce distributed computing system is shown as $\frac{1-r/K}{\min\{2r,K\}}$, $r \in \{1,2,\ldots,K\}$ \cite{LCW}. A wireless MapReduce scheme achieving the above optimal performance is also given in \cite{LCW}. However, the achievable scheme in \cite{LCW} requires a large file library, i.e., $N$ has to be a multiple of $\binom{K}{r}$. Therefore, it is desirable to have schemes that achieve  optimal performance and need a smaller $N$. In this work, we present wireless distributed computing schemes having both of the above properties. Our contributions are summarized below.
\begin{itemize}
	\item We first introduce an array called wireless MapReduce array that represents the Map, Shuffle, and Reduce phases in a single array. We show that a solution for the full-duplex wireless MapReduce distributed computing problem is feasible from a wireless MapReduce array.
	\item Two constructions of wireless MapReduce array are presented, and the schemes resulting from them achieve the optimal normalized delivery time, under one-shot linear transmissions, with smaller $N$ values than the scheme in \cite{LCW}. The first construction is for the $r\geq K/2 $ setting, and the scheme obtained from it has $N=K$. The second construction is for $K=tr$, $t$ is a positive integer greater than one, and the resulting wireless distributed computing scheme has $N=K/r$. (Both the constructions assume $r \in \{1,\ldots,K\}$). Thus, our achievable schemes have $N$ in the order of $K$.
	\item We show that wireless MapReduce arrays can be obtained from the extended placement delivery arrays known for multi-antenna coded caching schemes \cite{NPR,YWCQC}. 
\end{itemize}
\noindent\textit{Notations:} The set of natural numbers is denoted by $\mathbb{N}$. For $m \in \mathbb{N}$, the set $\{1,\ldots,m\}$ is denoted as $[m]$.

\section{System Model}
\label{sec:sysmodel}
Consider a wireless MapReduce distributed computing system consisting of $K$ nodes, which are entitled to compute $Q$ output functions $\phi_1,\phi_2,\ldots,\phi_Q$ that depend on a set of $N$ files  $W=\{w_1,w_2,\ldots,w_N\}$, where $w_i \in \mathbb{F}_{2}^{F}, \forall i \in [N
]$. The $q^{th}$ output function is defined as  $\phi_q(W): (\mathbb{F}_{2}^{F})^N \rightarrow \mathbb{F}_{2}^{B}$, where $B \in \mathbb{N}$ and $ q \in [Q]$. The  output function $\phi_q(W)=\phi_q(w_1,w_2,\ldots,w_N)$, $q \in [Q]$, can be decomposed as:
  $\phi_q(w_1,\ldots,w_N)=h_q(g_{q,1}(w_1),\ldots,g_{q,n}(w_n)),$
where $g_{q,n}(w_n)$, $n \in [N]$, is called the Map function and is defined as $g_{q,n}(w_n): \mathbb{F}_{2}^F \rightarrow \mathbb{F}_{2}^T$, and $h_q: (\mathbb{F}_{2}^T)^N \rightarrow \mathbb{F}_{2}^B$ is called the Reduce function. The function $g_{q,n}(w_n)$ maps the input file $w_n$ to an intermediate value $v_{q,n} := g_{q,n}(w_n) \in \mathbb{F}_{2}^T$, and the Reduce function $h_q$ acts on the intermediate values $v_{q,n}$ of all the $N$ files associated with the $q^{th}$ output function to compute $\phi_q(w_1,\ldots,w_n)$.

Each node $k \in [K]$ is responsible to compute a distinct non-overlapping set of output functions $\mathcal{W}_k \subset \{\phi_1,\ldots, \phi_Q\}$. i.e., $Q \geq K $ and $\mathcal{W}_{k_1} \cap \mathcal{W}_{k_2} = \emptyset$ for $k_1 \neq k_2$. A symmetric task assignment is assumed across all the nodes, therefore,  $\frac{Q}{K} \in \mathbb{N}$ and $|\mathcal{W}_{1}|=\cdots=|\mathcal{W}_{K}|=\frac{Q}{K}$. 
The description of the Map, Shuffle, and Reduce phases are given below.  

\subsubsection{Map Phase}
Each node $k \in [K]$ is assigned a set of files $\mathcal{M}_k \subseteq W$ to compute the intermediate values $v_{q,n}, \forall q \in [Q]$ (of all the $Q$ output functions), corresponding to the files in $ \mathcal{M}_k$. That is,  node $k$ computes $v_{1,n}, v_{2,n},\ldots, v_{Q,n}$,
 $\forall w_n \in \mathcal{M}_k$. Note that every file should be mapped by at least one node and $\cup_{k = 1,\ldots,K}\mathcal{M}_k=\{w_1,\ldots,w_N\}$. The average number of nodes that map a file is called computation load and is denoted by $r$. i.e., $r \triangleq \frac{\sum_{k=1}^{K}|\mathcal{M}_k|}{N}$.
 
 \subsubsection{Shuffle Phase}
 Following the Map phase, the nodes exchange the intermediate values that are available locally to facilitate the computation of the output functions. Node $k \in [K]$ requires the set of intermediate values $\{v_{q,n}, \forall n \in [N]: \phi_q \in \mathcal{W}_k\}$ to evaluate the output functions in $\mathcal{W}_k$. The following set of intermediate values  $\{v_{q,n}, \forall q \in [Q]: w_n \in \mathcal{M}_k \}$ is known to node $k$ from the Map phase, in which $\{v_{q,n}: \phi_q \in \mathcal{W}_k \text{ and } w_n \in \mathcal{M}_k\}$ is the set of intermediate values useful for node $k$ to compute the output functions in $\mathcal{W}_k$. Then, the remaining set of intermediate values required by node $k$, $\{v_{q,n}: \phi_q \in \mathcal{W}_k \text{ and } w_n \notin \mathcal{M}_k\}$, is obtained from other nodes through the communication among them. The communication between the nodes, in this phase, happens over a full-duplex wireless interference channel. It is assumed that channel state information is available to all the nodes. Let $S$ be the number of channel uses needed to provide the missing intermediate values required for all the nodes. It is assumed that a pre-agreed protocol exists among the nodes for transmission in the shuffle phase. The set of users transmitting in the $s^{th}$ channel use is denoted as $\mathcal{U}_s$, where $\mathcal{U}_s\subseteq [K]$ and $|\mathcal{U}_s|\leq K$. The message received by node $k$ corresponding to the communication in the $s^{th}$ channel use is denoted as $y_k(s)$, and is given as:
 $  y_k(s) = \sum_{m \in \mathcal{U}_s}h_{k,m}{x_m}(s)+z_k(s),$
 where $h_{k,m} \in \mathbb{C}$ is the channel gain from the transmitting node $m$ to the receiving node $k$, $x_m(s) \in \mathbb{C}$ is the message sent by node $m$ in the $s^{th}$ channel use satisfying the power constraint $\mathbb{E}[|x_m(s)|^2]\leq P$, and $z_k(s)$ is the zero-mean unit variance circularly symmetric additive white complex gaussian noise observed at the receiving node $k$ (note that, $T \approx \log P + o(\log P)$). The intermediate values $v_{q,n} \in \mathbb{F}_{2}^T$ are encoded into $\tilde{v}_{q,n}\in \mathbb{C}$ and these encoded intermediate values are used for transmissions in the Shuffle phase. We assume a one-shot linear communication in the Shuffle phase. Also, we neglect the additive noise $z_k(s)$ in the further analysis assuming the transmit power is large enough to keep the signal-to-noise ratio high.

  Consider a transmission in one channel use, say $s \in [S]$.  Let the set of coded intermediate values transmitted in the $s^{th}$ channel use be denoted as $\mathcal{N}_s$, and the set of intended receiving nodes be denoted as $\mathcal{R}_s$, where $ \mathcal{R}_s \subseteq [K]$. From the transmission made by all the users in $\mathcal{U}_s$, each node in $\mathcal{R}_s$ obtains one coded intermediate value and it will not be repeated in another channel use. Then, $|\mathcal{N}_s|=|\mathcal{R}_s|$ and $\mathcal{N}_s \cap \mathcal{N}_{s^{\prime}}=\emptyset,$ $\forall s \neq s^{\prime}$ and $s,s^{\prime}\in [S]$.  Consider a node $m\in \mathcal{U}_s$. Then, the transmitted message by node $m$ in the $s^{th}$ channel use, denoted as $x_{m}(s)$, is of the form: 
 \begin{equation*}
   x_{m}(s) = \sum_{(q,n): \tilde{v}_{q,n} \in \mathcal{N}_s, n \in \mathcal{M}_m} \alpha_{m,q,n}\tilde{v}_{q,n},
 \end{equation*}
 where $\alpha_{m,q,n} \in \mathbb{C}$ is the precoding coefficient. The intended nodes in $\mathcal{R}_s$ will be able to decode its desired $\tilde{v}_{q,n}$ from the linear combination of the messages received in the $s^{th}$ channel use. For any node $k \in \mathcal{R}_s$, the received message $y_k(s)$ consists of an intermediate value desired by node $k$ and other unwanted intermediate values, in which some are available locally and the rest need to be zero-forced by  properly designing the precoding coefficients. Upon receiving the required intermediate values, the nodes convert the coded intermediate values $\tilde{v}_{q,n}$ back to $v_{q,n} \in \mathbb{F}_2^{T}$ before the Reduce phase.
 
 \subsubsection{Reduce Phase}
 Each node $k \in [K]$ is assigned to compute a subset of output functions $\mathcal{W}_k$, where $|\mathcal{W}_k|=\frac{Q}{K} \in \mathbb{N}$. 
 From the Map phase and the communication in the Shuffle phase, each node $k \in [K]$ obtains the required intermediate values to compute the corresponding Reduce functions, i.e., $h_q(v_{q,1},\ldots,v_{q,N})$, for all $ \phi_q \in \mathcal{W}_k$.
 
The performance measure of the communication in the Shuffle phase is called normalized delivery time, denoted by $L$, and is defined as:
$
    L = \displaystyle\lim_{P \rightarrow \infty}\lim_{T \rightarrow \infty}\frac{S}{NQT/\log P}.$
 
 The performance of the wireless distributed computing system is characterized in terms of computation load, $r$ and normalized delivery time, $L$. The computation load-normalized delivery time pair $(r,L)$ is said to be achievable if there exists a wireless MapReduce scheme (with one-shot linear assumption in the Shuffle phase) that could decode all the intermediate values with vanishing error probability as $T$ increases. 
Our objective is to design wireless MapReduce schemes that exist for small  number of files, $N$, and  achieve the optimal normalized delivery time under one-shot linear schemes for a computation load, $r$. The communication in the Shuffle phase is said to be one-shot if no intermediate value is transmitted in more than one channel use. Consequently, the nodes can decode any required intermediate value from at most one channel use.
 \section{Main Results}
 The existing wireless MapReduce schemes require $N$ to be sufficiently large (i.e., $N$ needs to be a multiple of $\binom{K}{r}$). In this section, we present wireless MapReduce schemes that exist when the number of files $N$ is linear with respect to $K$. To describe our schemes, we define an array called Wireless MapReduce Array that succinctly represents all the three phases in a single array. 
 
 \begin{defn}[Wireless MapReduce Array] For positive integers $K$, $N$, $S$, and $r$, an $N \times K$ array $\mathbf{A} = [a_{i,k}]$, $i \in [N]$, $k \in [K]$, composed of `$\star$' and integers from $[S]$ is said to be a $(K,N,r, S)$ wireless MapReduce array if the following conditions are satisfied.\\
 A1. Each row has $r$ $\star$'s. \\
 A2. Each integer in $[S]$ appears $\min\{2r,K\}$ times in the array and not more than once in any column.\\
 A3. Any row in the subarray $\mathbf{A}^{(s)}$, obtained by eliminating all the rows and columns not containing the integer $s$, has at most $r$ integers.
 \end{defn}

\begin{exmp}
	   $(5,5,3,2)$ wireless MapReduce array $\mathbf{A}:$
	 \begin{equation*}
	 \mathbf{A} = 
	 \begin{bmatrix}
        \star & 1 & 1 & \star & \star \\
        \star & \star & 2 & 1 & \star \\
        \star & \star & \star & 2 & 1 \\
        1 & \star & \star & \star & 2 \\
        2 & 2 & \star & \star & \star 
	 \end{bmatrix}.
	 \label{eq:ex_wmra}
	 \end{equation*}
	 It is evident that the array $\mathbf{A}$ satisfies  conditions A1 and A2. To verify condition $A3$, consider the subarrays $\mathbf{A}^{(1)}$ and $\mathbf{A}^{(2)}$ shown below.
	 \begin{equation*}
	   \mathbf{A}^{(1)} = 
	   \begin{bmatrix}
	      \star & 1 & 1 & \star & \star \\
	      \star & \star & 2 & 1 & \star \\
	      \star & \star &  \star & 2 & 1 \\
	      1 & \star & \star & \star & 2
	   \end{bmatrix} \quad \quad
	    \mathbf{A}^{(2)} = 
	   \begin{bmatrix}
	   \star & \star & 2 & 1 & \star \\
	   \star & \star &  \star & 2 & 1 \\
	   1 & \star & \star & \star & 2 \\
	   2 & 2 & \star & \star & \star
	   \end{bmatrix}.
	 \end{equation*}
	 No row in $\mathbf{A}^{(1)}$ and $\mathbf{A}^{(2)}$ contains more than $3$ integers. Hence, condition A3 is satisfied.
\end{exmp}

  Let $g=\min\{2r,K\}$ denote the number of times each integer appears in $\mathbf{A}$. Then, the total number of integers present in $\mathbf{A}$ is $Sg$. Alternatively, we count the number of integers present in $\mathbf{A}$ row-wise and obtain it as $N(K-r)$. Thus, we get $ N(K-r) = Sg$, which implies $S= \frac{N(K-r)}{g}$.
   A $(K,N,r,S)$ wireless MapReduce array leads to a wirelss coded distributed computing scheme based on the MapReduce framework for a system with  $K$ nodes, $Q$ output functions, and $N$ files as described below.

The $K$ columns of the wireless MapReduce array $\mathbf{A}$ correspond to each computing node, and the $N$ rows of  $\mathbf{A}$ correspond to the files. Without loss of generality, we assume that $\frac{Q}{K}=1$. When $\frac{Q}{K} \in \mathbb{N}\backslash \{1\}$, the transmission policy in the Shuffle phase needs to be repeated $Q/K$ times.

\textit{Map Phase:} The set of files mapped by each node, $\mathcal{M}_k$, is given as:
\begin{equation}
 \mathcal{M}_k = \{w_i: a_{i,k}= \star \}.
\end{equation}
For each mapped file  $w_n \in \mathcal{M}_k$, the node computes the intermediate value $v_{q,n}$ of all the $Q$ output functions. Hence, $a_{i,k}=\star$ implies that the node $k$ has the intermediate values $v_{q,i}$, $\forall q \in [Q]$.  The computation load in the Map phase is obtained as: $\frac{\sum_{k=1}^{K}|\mathcal{M}_k|}{N}=Nr/N =r$ (from condition $A1$).

\textit{Shuffle Phase:} In the Shuffle phase, the nodes exchange the intermediate values among themselves such that all the nodes obtain the remaining intermediate values that are required to calculate the output functions in $\mathcal{W}_k$. The exchange of intermediate values are completed in $S$ channel uses, and a set of nodes act as both transmitters and receivers in a channel use. The transmissions are described below.

Consider an integer $s \in [S]$ in $\mathbf{A}$, and $s$ appears $g=\min\{2r,K\}$ times in $\mathbf{A}$. Let $a_{i_1,k_1}=a_{i_2,k_2}=\cdots = a_{i_{g},k_{g}}=s$. Then, the nodes $k_1,k_2,\ldots,k_{g}$ transmit and receive simultaneously in the $s^{th}$ channel use. That is, $\mathcal{U}_s =\{k_1,k_2,\ldots,k_{g}\}= \mathcal{R}_s$. The set of coded intermediate values transmitted in the $s^{th}$ channel use, $\mathcal{N}_s$, correspond to the $\star$ entries in $\mathbf{A}^{(s)}$. That is,  $\mathcal{N}_s= \{\tilde{v}_{k_j,i_n}: a_{i_n,k_j}=\star, n \in [g], j \in [g]\}$. The transmission made by node $k_j$, $j \in [g]$, is: 
\begin{equation}
  x_{k_j}(s) = \sum_{(q,n):\tilde{v}_{q,n} \in \mathcal{N}_s, a_{n,k_j}=\star  } \alpha_{k_j,q,n}\tilde{v}_{q,n}.
\end{equation}
From the transmissions in the $s^{th}$ channel use, each node in $\mathcal{R}_s$ is able to obtain one missing intermediate value that is required to compute its assigned Reduce function. Consider a node $k \in \mathcal{R}_s$. The received message at node $k$ takes the form: 
\begin{subequations}
\begin{align}
  y_k(s) & = \sum_{m \in \mathcal{U}_s}h_{k,m}x_{m}(s) + z_k(s) \label{eq:rxmsg1}\\
  & = \sum_{m \in \mathcal{U}_s} \sum_{(q,n):\tilde{v}_{q,n} \in \mathcal{N}_s, a_{n,m}=\star  } h_{k,m} \alpha_{m,q,n}\tilde{v}_{q,n}. 
  \label{eq:rxmsg2}
\end{align}
\end{subequations}
Assuming high SNR regime, we neglect the additive noise in \eqref{eq:rxmsg1} in the further analysis. It is assumed that the channel coefficients are known to all the users. Note that any intermediate value $v_{q,n}$ is available with $r$ nodes, and any row in $\mathbf{A}^{(s)}$ contains only at most $r$ integers. 
Therefore, proper design of precoding coefficients enables the nodes in $\mathcal{R}_s$ to decode the desired intermediate values from the transmissions. After the transmissions in $S$ channel uses, each node obtains the missing intermediate values required to compute the output functions. Note that the retrieved intermediate values are in coded form, and they need to be decoded back to uncoded form before the Reduce phase.

\textit{Reduce Phase:} Without loss of generality, we assume  ${Q}={K}$ and the node $k \in [K]$ is assigned to compute the output function $\phi_k(w_1,w_2,\ldots,w_N)$. Since $\phi_k(w_1,w_2,\ldots,w_N) = h_k (v_{k,1},v_{k,2},\ldots,v_{k,N})$, node $k$ computes the Reduce function $h_k$ using the intermediate values of all the $N$ files of the $k^{th}$ output function. Thus, all the nodes are able to compute their assigned output functions.

From the above analysis, we have the following lemma.

\begin{lem}
	Given a $(K,N,r,S)$ wireless MapReduce array, it is possible to obtain a wireless MapReduce scheme for a system having $K$ users, $Q=K$ output functions, and $N$ files with a computation load $r$ and a normalized delivery time $L = S/NQ= \frac{1}{\min\{2r,K\}}\left(1-\frac{r}{K} \right)$.
	\label{lem1}
\end{lem}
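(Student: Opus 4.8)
The construction of the Map, Shuffle, and Reduce phases from $\mathbf{A}$ has already been given before the statement, so the plan is to verify that it is a valid one-shot linear scheme with the claimed parameters. Three things must be checked: that the Map phase attains computation load exactly $r$, that the normalized delivery time evaluates to the stated expression, and --- the substantive point --- that in each channel use every active node can linearly recover the one intermediate value it is missing. The first is immediate from A1: each of the $N$ rows carries $r$ stars, so $\sum_{k}|\mathcal{M}_k|=Nr$ and the load is $Nr/N=r$. The second is arithmetic using the count $S=\tfrac{N(K-r)}{g}$, $g=\min\{2r,K\}$, derived before the lemma: with $Q=K$ one gets $L=\tfrac{S}{NQ}=\tfrac{N(K-r)}{g\,NK}=\tfrac{K-r}{gK}=\tfrac{1}{\min\{2r,K\}}\bigl(1-\tfrac{r}{K}\bigr)$. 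Hence the heart of the argument is decodability.

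For decodability I would fix an integer $s\in[S]$ and treat its channel use in isolation. By A2 the $g$ occurrences of $s$ sit in $g$ distinct columns $k_1,\dots,k_g$, so $\mathcal{U}_s=\mathcal{R}_s$ consists of $g$ distinct nodes and the channel in this slot is a generic $g\times g$ matrix $[h_{k,m}]$; moreover, since $s$ occurs at most once per column, each active node $k_j$ has a single target in this slot, namely the value $v_{k_j,i_j}$ at the occurrence $a_{i_j,k_j}=s$. Reading off $y_{k_j}(s)$ from \eqref{eq:rxmsg2}, the effective coefficient with which a delivered value $\tilde v_{q,n}$ reaches a receiver $k$ is $\sum_{m\in\mathcal{U}_s:\,a_{n,m}=\star}h_{k,m}\alpha_{m,q,n}$, a sum over exactly the active nodes holding file $n$. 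I would then classify every non-target term of $y_{k_j}(s)$ as either (i) a value $v_{q,n}$ with $a_{n,k_j}=\star$, which node $k_j$ already has from the Map phase and can subtract (it knows both the value and the coefficient), or (ii) a value with $a_{n,k_j}$ an integer, which node $k_j$ does not hold and which must therefore be zero-forced by the precoder design.

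The crux is a counting argument showing that all type-(ii) interference can be nulled while keeping every target coefficient nonzero. Because the precoders $\{\alpha_{m,q,n}\}$ of distinct delivered values are disjoint variable sets, I would design each value's beam separately: $v_{q,n}$ is beamformed by the $\tau$ active nodes carrying a star in row $n$ of $\mathbf{A}^{(s)}$, and must be suppressed at the active receivers carrying an integer in row $n$ other than its intended recipient, of which there are $\nu=\iota-1$, where $\iota$ is the number of integer entries of that row inside $\mathbf{A}^{(s)}$. Since every active column has some entry in row $n$, we have $\tau+\iota=g$, so placing the $\nu$ required nulls while retaining nonzero gain at the intended node reduces to $\tau\ge\nu+1$, i.e. $g\ge 2\iota$. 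Here A3 supplies $\iota\le r$: when $2r\le K$ we have $g=2r\ge 2\iota$; when $2r>K$ we have $g=K$, the rows of $\mathbf{A}^{(s)}$ are full rows, so A1 gives $\iota=K-r$ and A3 forces $K-r\le r$, again yielding $g=K\ge 2\iota$. Thus the dimension count closes in both regimes, and this is precisely where the choice $g=\min\{2r,K\}$ is used. The one technical point I expect to require care --- and the main obstacle --- is passing from this dimension count to an explicit choice of $\alpha$'s: I would argue that for generic channel gains the coefficient matrices assembled from $[h_{k,m}]$ have full rank, so the homogeneous null constraints leave a solution space not contained in the hyperplane ``target gain $=0$'', which furnishes the required precoders. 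Granting this, each $k_j$ recovers $v_{k_j,i_j}$; since every missing value $v_{k,n}$ (with $a_{n,k}$ an integer $s$) is delivered in the unique slot $s$, ranging over $s\in[S]$ supplies all $N(K-r)$ missing values, and repeating the schedule $Q/K$ times handles $Q>K$, completing the scheme with the stated load and delivery time.
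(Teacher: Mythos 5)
Your proposal is correct and takes essentially the same approach as the paper: the paper proves Lemma~\ref{lem1} precisely by the construction you verify (Map sets read off the $\star$'s, one channel use per integer $s$ with $\mathcal{U}_s=\mathcal{R}_s$ given by the columns of $\mathbf{A}^{(s)}$, local cancellation of held values plus zero-forcing of the rest, and $L=S/NQ$ with $S=N(K-r)/g$). Your row-wise dimension count ($\tau+\iota=g$ and feasibility when $g\ge 2\iota$, settled via A3 in the regime $2r\le K$ and via A1 together with A3 in the regime $2r> K$, for generic channel gains) is simply a more explicit rendering of the paper's one-sentence justification that any intermediate value is held by enough transmitters while any row of $\mathbf{A}^{(s)}$ has at most $r$ integers, so proper precoding suffices.
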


\subsection{Constructions of Wireless MapReduce Arrays}
We present two constructions of wireless MapReduce arrays that result in coded distributed computing schemes having $N$ in the order of $K$. Also, the schemes achieve the optimal normalized delivery time under the constraint of one-shot linear transmissions. Thus, we have the following theorem.

\begin{thm}
	For a wireless distributed computing system having $K$ nodes, $Q=K$ output functions, and $N$ files, the optimal normalized delivery time,  under one-shot linear schemes,
	\begin{equation}
	L(r)= \frac{1}{\min\{2r,K\}}\left(1-\frac{r}{K}\right), 
	\label{eq:load}
	\end{equation}
for a computation load $r \in [K]$, is achievable in the following cases:
	\begin{enumerate}[label=(\alph*)]
		\item $r \geq \frac{K}{2}$,
		\item $K = tr$, $t \in \mathbb{N} \backslash \{1\}$,
	\end{enumerate}
with $N=K$ and $N=K/r$, respectively.
\label{thm1}
\end{thm}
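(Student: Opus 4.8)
The plan is to prove Theorem~\ref{thm1} constructively: in each regime I exhibit an explicit $(K,N,r,S)$ wireless MapReduce array with the claimed $N$, check conditions A1--A3, and then appeal to Lemma~\ref{lem1}, which turns any such array into a scheme with normalized delivery time $L=S/(NK)$. Optimality needs no separate argument, since $L(r)$ in \eqref{eq:load} is precisely the one-shot linear optimum of \cite{LCW}; it is enough that the constructed $S$ reproduces this value. That part is pure bookkeeping via the identity $S=N(K-r)/g$ already derived above, with $g=\min\{2r,K\}$. In case (a), $r\ge K/2$ gives $g=K$, so $N=K$ yields $S=K-r$ and $L=S/(NK)=\tfrac1K(1-r/K)$. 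In case (b), $t\ge 2$ gives $g=2r$, so $N=t$ and $K=tr$ yield $S=\binom{t}{2}$ and $L=\binom{t}{2}/(t\cdot tr)=\tfrac1{2r}(1-r/K)$. Both equal \eqref{eq:load}, so the whole content is the construction and the verification of A1--A3.

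For case (a) I would start from any $K\times K$ $\star$-pattern that is $r$-regular in both rows and columns; a circulant band of $r$ consecutive $\star$'s per row, cyclically shifted, works and gives A1 directly (the $(5,5,3,2)$ array in the excerpt is the $K=5,r=3$ instance). Since $g=K$, condition A2 demands that each label occur once in every column; I would obtain this for free by labeling, within each column independently, its $K-r$ integer entries with the $K-r$ distinct symbols of $[K-r]$. Each symbol then appears exactly once per column, hence $K$ times overall, so A2 holds regardless of how the bijections are chosen across columns.

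For case (b) I would take the $t\times K$ array, split the $K=tr$ columns into $t$ consecutive blocks of $r$ columns, and place the $r$ $\star$'s of row $i$ in block $i$; this gives A1 and makes block $i$ the unique $\star$-block of row $i$, so every entry of row $i$ outside its own block is an integer and each column carries exactly $t-1$ integers. I would then index the $S=\binom{t}{2}$ labels by the unordered pairs $\{i,j\}$ of rows and fill the off-diagonal block $j$ of row $i$ monochromatically with the label $\{i,j\}$. Label $\{i,j\}$ then lives only in row $i$ (block $j$) and row $j$ (block $i$), $r$ times in each and never twice in a column, giving A2 with $g=2r$.

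The step I expect to carry the real content is A3, which asks that after deleting every row and column missing a fixed label $s$, no surviving row has more than $r$ integers. In case (a) this is essentially automatic: because $s$ sits in every column, no column is deleted, so a surviving row keeps all $K-r$ of its integers, and A3 becomes the inequality $K-r\le r$, i.e.\ exactly the hypothesis $r\ge K/2$; notably this is independent of the labeling. Case (b) is the delicate one, since there columns are deleted and the count genuinely depends on the chosen labeling: a careless assignment could let $s$ touch many columns and inflate the surviving integer count past $r$. Here the block-and-pair labeling is what saves it---the columns containing $s=\{i,j\}$ are exactly blocks $i$ and $j$, and within those columns row $i$ shows $\star$'s in block $i$ and only the label $s$ in block $j$ (and symmetrically for row $j$), so each surviving row has precisely $r$ integers. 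Once A3 is confirmed in both cases, Lemma~\ref{lem1} delivers the scheme and its optimal $L(r)$, completing the proof.
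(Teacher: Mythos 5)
Your proof is correct and follows essentially the same route as the paper: construct explicit $(K,N,r,S)$ wireless MapReduce arrays, verify A1--A3, and invoke Lemma~\ref{lem1} together with the converse of \cite{LCW} for optimality. Your case (a) array is the paper's circulant construction (with a mild extra freedom in the per-column labeling that changes nothing in the A3 argument, which in both cases reduces to $K-r\le r$), and your case (b) block-and-pair array is exactly the paper's construction --- the $r$-fold horizontal concatenation of a $t\times t$ diagonal-$\star$ array with pair labels --- up to a permutation of columns.
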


\begin{proof}
	The schemes achieving the normalized delivery time in \eqref{eq:load} are obtained by designing appropriate wireless MapReduce arrays for both cases.

	We first consider case (a) $r \geq K/2 $. A $(K,K,r,K-r)$ wireless MapReduce array $\mathbf{A}=[a_{i,k}]$ is constructed as described below. The position of $\star$'s in the $k^{th}$ column of $\mathbf{A}$ is as follows:
	\begin{align}
	  a_{i,k}= \star,\text{ } \forall (i,k) \in [K]\times [K]\text{ such that } \langle i-k \rangle_K < r.
	  \label{eq:starplace}
	\end{align}
	From \eqref{eq:starplace}, it easy to see that there are $r$ consecutive $\star$'s in a row and $r$ consecutive $\star$'s in a column. Thus, condition $A1$ is satisfied. In the $k^{th}$ column, $a_{k,k}=a_{\langle k+1 \rangle_K,k}=\cdots=a_{\langle k+r-1 \rangle_K,k}=\star$. Next, we assign the integers. Consider an integer $s \in [K-r]$. Then, $\forall k \in [K]$,
	\begin{equation}
	 a_{i,k} = s \text{ for } i=\langle k-1+r+s \rangle_K.
	 \label{eq:integerplace}
	\end{equation}
From \eqref{eq:integerplace}, it is evident that $s$ appears in all the $K$ columns and does not appear more than once in any column. Thus, condition $A2$ is also satisfied. Now, it remains to verify condition $A3$. Note that all the $K-r$ integers appear in all the $K$ columns. Therefore, $\mathbf{A}^{(s)}=\mathbf{A}$, $\forall s\in [K-r]$, and each row of $\mathbf{A}$ has $K-r$ integers (from condition $A1$). Since $r \geq \frac{K}{2}$, we have $K-r \leq r$. Thus, condition $A3$ is satisfied, and $\mathbf{A}$ is a $(K,K,r,K-r)$ wireless MapReduce array. 

Using  Lemma~\ref{lem1} and the above constructed $(K,K,r,K-r)$ wireless MapReduce array $\mathbf{A}$,  we obtain a distributed computing scheme for a full-duplex wireless distributed computing system with $K$ users, $Q=K$ output functions, and $N=K$ files. Since $K \leq 2r$, we get $g=K$. Thus, for a computation load $r$, the scheme achieves the normalized delivery time  $L = \frac{1}{K}\left(1-\frac{r}{K} \right)$, which is optimal under one-shot linear schemes \cite{LCW}.

Next, we consider case (b) $ K= tr$, where $t \in \mathbb{N}\backslash \{1\}$. In this case, we first construct a $(\frac{K}{r},\frac{K}{r},1,\frac{t(t-1)}{2})$ wireless MapReduce array $\mathbf{B} = [b_{i,k}]$ as given below. The $\star$'s appear in $\mathbf{B}$ as:
\begin{equation}
 b_{i,i} = \star, \forall i \in [K/r].
 \label{eq:starplace2}
\end{equation}
Every row and column of $\mathbf{B}$ contains only one $\star$, thus satisfying condition $A1$. Each integer $s \in [\frac{t(t-1)}{2}]$ appears twice in $\mathbf{B}$, and the subarray $\mathbf{B}^{(s)}$ is always of size $2 \times 2$, $\forall s \in [\frac{t(t-1)}{2}]$. Let $s$ be expressed as $s=\frac{u(u-1)}{2}+v$, where $u \in [t-1]$ and $v \in [u]$. Then, $s$ appears in $\mathbf{B}$ as follows: 
		\begin{align}
		  b_{v,t-u+v} =s \text{ and }
		  b_{t-u+v,  v }=s.
		  \label{eq:integerplace2}
		\end{align}
From \eqref{eq:starplace2}, we have $b_{i,i}=\star$, where $i \in \{v, t-u+v \}$. Therefore,   $\mathbf{B}^{(s)}$ is of the form $\mathbf{B}^{(s)}=\begin{bmatrix}
\star & s\\s & \star
\end{bmatrix}$ or $\begin{bmatrix}
s & \star\\\star & s
\end{bmatrix}$, thus ensuring conditions $A2$ and $A3$. Upon constructing $\mathbf{B}$, we concatenate it horizontally $r$ times to obtain another array $\mathbf{C}$. That is, $\mathbf{C}=\underbrace{[\mathbf{B}; \cdots;\mathbf{B}]}_\text{$r$ times}$, and $\mathbf{C}$ is, indeed, a $(K,\frac{K}{r},r,\frac{t(t-1)}{2})$ wireless MapReduce array. Condition $A2$ follows directly from $\mathbf{B}$. There are $r$ $\star$'s in every row  of $\mathbf{C}$ as $\mathbf{B}$ contains only one $\star$ in every row, thus condition $A1$ is satisfied. In a similar manner, condition $A3$ also holds because $\mathbf{C}^{(s)}=\underbrace{[\mathbf{B}^{(s)}; \cdots;\mathbf{B}^{(s)}]}_\text{$r$ times}$, $\forall s \in [\frac{t(t-1)}{2}]$. Thus, each row of $\mathbf{C}^{(s)}$ contains exactly $r$ integers. 

We use $\mathbf{C}$ to obtain a wireless MapReduce distributed computing scheme for a system with $K$ users, $Q=K$ output functions, and $N=\frac{K}{r}$ files. The resulting scheme from $\mathbf{C}$ achieves the optimal normalized delivery time (under one-shot linear schemes) $L =  \frac{1}{2r}\left(1-\frac{r}{K} \right)$ for the computation load  $r$.


This completes the proof of Theorem~\ref{thm1}.
\end{proof}
We present an example corresponding to each case to describe the wireless distributed computing schemes. 
\begin{exmp}
Consider a full-duplex wireless distributed computing system with $K=Q=5$, $N=5$, and $r=3$. 
\label{exmpcasea}
\end{exmp}
Since $r \geq K/2$, we construct a $(5,5,3,2)$ wireless MapReduce array $\mathbf{A}$ as described in case (a). Thus, 
 \begin{equation}
 \small
\mathbf{A} = 
\begin{bmatrix}
\star & 2 & 1 & \star & \star \\
\star & \star & 2 & 1 & \star \\
\star & \star & \star & 2 & 1 \\
1 & \star & \star & \star & 2 \\
2 & 1 & \star & \star & \star 
\end{bmatrix}.
\label{eq:ex_casea}
\end{equation}
The set of files mapped by each node is represented by $\star$'s in its corresponding column. For instance, $\mathcal{M}_1=\{w_1,w_2,w_3\}$. Each node $k \in [5]$ generates the intermediate values of all the $5$ output functions associated with the files in $\mathcal{M}_k$. Assume that the $k^{th}$ node is entitled to compute the $k^{th}$ Reduce function $\phi_k$. To compute the assigned Reduce function, each node needs five intermediate values, out of which three intermediate values are already available with the node from the Map phase. The remaining two intermediate values are obtained from the data exchange among $5$ nodes. The Shuffle phase gets completed in $2$ channel uses. 
All the $5$ nodes transmit and receive in both the channel uses. The set of coded intermediate values transmitted in the two channel uses are: $\mathcal{N}_1=\{\tilde{v}_{1,4}, \tilde{v}_{2,5}, \tilde{v}_{3,1}, \tilde{v}_{4,2}, \tilde{v}_{5,3}, \}$ and $\mathcal{N}_2=\{\tilde{v}_{1,5}, \tilde{v}_{2,1}, \tilde{v}_{3,2}, \tilde{v}_{4,3}, \tilde{v}_{5,4}, \}$. In the interest of space, the transmissions in the first channel use are only given: $x_1(1)=\alpha_{1,3,1}\tilde{v}_{3,1}+\alpha_{1,4,2}\tilde{v}_{4,2}+\alpha_{1,5,3}\tilde{v}_{5,3}$,
$x_2(1)=\alpha_{2,1,4}\tilde{v}_{1,4}+\alpha_{2,4,2}\tilde{v}_{4,2}+\alpha_{2,5,3}\tilde{v}_{5,3}$, $x_3(1)=\alpha_{3,1,4}\tilde{v}_{1,4}+\alpha_{3,2,5}\tilde{v}_{2,5}+\alpha_{3,5,3}\tilde{v}_{5,3}$,
$x_4(1)=\alpha_{4,1,4}\tilde{v}_{1,4}+\alpha_{4,2,5}\tilde{v}_{2,5}+\alpha_{4,3,1}\tilde{v}_{3,1}$,
$x_5(1)=\alpha_{5,2,5}\tilde{v}_{2,5}+\alpha_{5,3,1}\tilde{v}_{3,1}+\alpha_{5,4,2}\tilde{v}_{4,2}$.
From the transmissions in the first channel use, each user gets one missing intermediate value. Consider user $1$. Corresponding to the transmissions in the first channel use, the received message at user $1$ is of the form: $y_1(1)=\sum_{m \in [5]}h_{1,m}x_m(1)+z_1(1)$. Assuming high-SNR regime, $z_1(1)$ is neglected and $y_1(1)$ can be rewritten as follows:
\begin{align}
  & y_1(1)  = \mathbf{h}_{1,\{234\}}^{\mathrm{T}}\bm{\alpha}_{\{234\}1,4}\tilde{v}_{1,4}+\mathbf{h}_{1,\{345\}}^{\mathrm{T}}\bm{\alpha}_{\{345\}2,5}\tilde{v}_{2,5}+ \notag\\
   & \mathbf{h}_{1,\{145\}}^{\mathrm{T}}\bm{\alpha}_{\{145\}3,1}\tilde{v}_{3,1}+  \mathbf{h}_{1,\{125\}}^{\mathrm{T}}\bm{\alpha}_{\{125\}4,2}\tilde{v}_{4,2}+ \notag \\
 & \mathbf{h}_{1,\{123\}}^{\mathrm{T}}\bm{\alpha}_{\{123\}5,3}\tilde{v}_{5,3},
 \label{eq:rx}
\end{align}
where $\mathbf{h}_{1,\{m_1m_2m_3\}}=[h_{1,m_1}, h_{1,m_2}, h_{1,m_3}]^{\mathrm{T}} \in \mathbb{C}^{r \times 1}$ represents the channel gains between node 1's receiving antenna and the transmitting antennas of nodes $m_1$, $m_2$, $m_3$, $\bm{\alpha}_{\{m_1m_2m_3\},q,n}=[\alpha_{m_1,q,n},\alpha_{m_2,q,n},\alpha_{m_3,q,n},] \in \mathbb{C}^{r \times 1}$, $q \in [Q]$, $n \in [N]$, represents the precoding vector of the coded intermediate value $\tilde{v}_{q,n}$. The channel gains, precoding coefficients are known to all the users, and the coded intermediate values $\tilde{v}_{3,1}$, $\tilde{v}_{4,2}$, $\tilde{v}_{5,3}$ are available to node $1$ from the Map phase. The coded intermediate value $\tilde{v}_{2,5}$ is not available with node $1$, hence it needs to be eliminated from \eqref{eq:rx} to enable the decoding of $\tilde{v}_{1,4}$. This elimination is facilitated by designing $\bm{\alpha}_{\{345\}2,5} \perp \mathbf{h}_{1,\{345\}} $, and such a design of the precoding vector is possible. Thus, node $1$ can decode $\tilde{v}_{1,4}$. Similarly, from $y_1(2)$, node $1$ gets $\tilde{v}_{1,5}$. In a similar manner, all other nodes decode their desired intermediate values from the received messages. Thus, $L=2/25$. From the transmissions and the decoding criterion, it is evident that communication in the Shuffle phase is one-shot and linear. Therefore, the obtained normalized delivery time $L=2/25$ is optimal, under one-shot linear schemes, for the computation load $r=3$.

\textit{Comparison with the scheme in \cite{LCW}}: For $K=Q=5$ and $r=3$, the scheme in \cite{LCW} requires $N$ to be $\binom{K}{r}=10$.

\begin{exmp}
 A full-duplex wireless distributed computing system with $K=Q=6$, $N=3$, and $r=2$.
\end{exmp}
For this example, we construct a $(6,3,2,3)$ wireless MapReduce array $\mathbf{C}$ as explained in case (b) ($t=3$). The array $\mathbf{C}$ is constructed from a $(3,3,1,3)$ wireless MapReduce array $\mathbf{B}$. The arrays $\mathbf{B}$ and $\mathbf{C}$ are given below.
\begin{equation*}
\mathbf{B}=
\begin{bmatrix}
 \star & 2 & 1\\
 2 & \star & 3\\
 1 & 3 &\star
 \end{bmatrix} \quad 
 \mathbf{C}=
 \begin{bmatrix}
 \star & 2 & 1 & \star & 2 & 1\\
 2 & \star & 3 & 2 & \star & 3\\
 1 & 3 &\star & 1 & 3 &\star
 \end{bmatrix} 
\end{equation*}
We omit further details due to the space constraints. Note that in each channel use, $4$ users transmit and receive simultaneously. The transmissions and the decoding procedure are similar to Example \ref{exmpcasea}. The normalized delivery time achieved by the wireless distributed computing scheme resulting from $\mathbf{C}$ is $L=1/6$, which is optimal for $r=2$ under one-shot linear schemes. For $K=Q=6$ and $r=2$, the scheme in \cite{LCW} wants the number of input files to be  $N=45$ $ (N=\binom{K-r-1}{r-1}\binom{K}{r})$.
\subsection{Wireless MapReduce Arrays from Extended Placement \\Delivery Arrays (EPDAs) in \cite{NPR,YWCQC}}
The EPDAs are introduced for multi-antenna coded caching \cite{NPR}, \cite{YWCQC}, and are defined as follows: For positive integers $K, r<K, N, Z<N, S$, and $g$, an $N\times K$ array consisting of $S$ integers and $\star$ is said to be a $g-$regular $(K,r,N,Z,S)$ EPDA if it satisfies the following conditions: (i) the symbol $\star$ appears $Z$ times in each column, (ii) every integer in $[S]$ occurs $g$ times in the array and no integer appears more than once in any column, and (iii) condition $A3$. 

The wireless MapReduce arrays can be obtained as a special case of the EPDAs as given below. 
\begin{lem}
 Any $2r-$regular $(K,r,N,Z,S)$ EPDA with $r=\frac{KZ}{N}$ is a $(K,N,r,S)$ wireless MapReduce array if each row contains exactly $r$ $\star$'s.	
 \label{lem2}
\end{lem}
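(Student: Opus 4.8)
The plan is to verify the three defining conditions A1, A2, and A3 of a wireless MapReduce array one at a time, showing that each is either inherited verbatim from the EPDA axioms or follows from a one-line counting argument. Since an EPDA already lives on the same ambient object as a wireless MapReduce array---an $N \times K$ array whose entries are either $\star$ or an integer from $[S]$---the task reduces entirely to matching up the respective conditions, so no fresh construction is needed.

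First I would dispose of the two conditions that transfer with no work. Condition A1 asks that each row contain exactly $r$ $\star$'s, which is precisely the additional hypothesis imposed on the EPDA in the statement, so it holds by assumption. Condition A3 of the wireless MapReduce array is literally item (iii) in the EPDA definition, so it too holds immediately. This isolates A2 as the only condition carrying any content.

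Condition A2 requires each integer of $[S]$ to appear exactly $\min\{2r, K\}$ times and never more than once within a single column. Because the EPDA is $2r$-regular, every integer appears exactly $2r$ times, and condition (ii) of the EPDA already forbids an integer from repeating inside a column; thus the ``at most once per column'' clause of A2 is free. What remains is to identify $\min\{2r, K\}$ with $2r$, i.e.\ to show $2r \le K$.

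I expect the inequality $2r \le K$ to be the one genuinely load-bearing step, though it is still elementary: each integer occupies $2r$ cells by regularity, yet occupies at most one cell in each of the $K$ columns, so these $2r$ occurrences must land in $2r$ distinct columns, forcing $2r \le K$ by pigeonhole. Hence $\min\{2r, K\} = 2r$ and A2 is confirmed. The hypothesis $r = KZ/N$ serves only as a consistency relation---double counting the $\star$'s column-wise and row-wise gives $KZ = Nr$, reconciling the EPDA's per-column count $Z$ with the wireless MapReduce array's per-row count $r$---and is not otherwise invoked. With A1, A2, and A3 all established, the EPDA is a $(K,N,r,S)$ wireless MapReduce array, which completes the proof.
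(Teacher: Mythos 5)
Your proposal is correct and matches the paper's (implicit) reasoning: the paper states Lemma~2 without a written proof, treating it as a direct matching of the EPDA conditions (i)--(iii) against A1--A3, which is exactly the verification you carry out. Your pigeonhole observation that the $2r$ occurrences of an integer must lie in distinct columns, forcing $2r \le K$ and hence $\min\{2r,K\}=2r$, together with the remark that $r=KZ/N$ is just the row/column double-counting consistency relation, correctly supplies the only nontrivial details.
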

In \cite{YWCQC}, it is shown that for any positive integers $K,t<K$, and $t\leq \Gamma <K$, there exists a $(t+\Gamma)-$regular $(K,\Gamma,N,tN/K,S)$ EPDA with $N=K(t+\Gamma)/(\text{gcd}(K,t,\Gamma))^2$ and $S=N(K-t)/(t+\Gamma)$ (from the multi-antenna coded caching scheme in \cite{SPSET}). Note that the resulting EPDA will have exactly $t$ $\star$'s in each row. By setting $t=\Gamma=r$, we have the following corollary from Lemma~\ref{lem2}.
\begin{cor}
  For any given positive integers $K$ and $r\leq K/2$, there exists a $(K,N,r,S)$ wireless MapReduce array with $N=2Kr/(\text{gcd}(K,r))^2$ and $S=N(K-r)/2r$. The scheme resulting from the above array achieves the normalized delivery time $L=\frac{1}{2r}\left(1-\frac{r}{K}\right)$, which is optimal under one-shot linear schemes, for the computation load $r$.
\end{cor}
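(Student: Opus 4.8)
The plan is to chain the cited EPDA construction from \cite{YWCQC} with Lemma~\ref{lem2} and Lemma~\ref{lem1}, so the argument is almost entirely a matter of matching parameters. First I would invoke the stated result of \cite{YWCQC}: for positive integers $K$, $t<K$, and $t\le\Gamma<K$ there exists a $(t+\Gamma)$-regular $(K,\Gamma,N,tN/K,S)$ EPDA with $N=K(t+\Gamma)/(\text{gcd}(K,t,\Gamma))^2$, $S=N(K-t)/(t+\Gamma)$, and exactly $t$ $\star$'s in each row. I would then specialize to $t=\Gamma=r$, which is admissible precisely because $r\le K/2$ guarantees $r<K$. With this choice the construction yields a $2r$-regular $(K,r,N,rN/K,S)$ EPDA whose parameters simplify to $N=2Kr/(\text{gcd}(K,r))^2$ and $S=N(K-r)/2r$, exactly the values in the statement, and which carries exactly $r$ $\star$'s in every row.

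Next I would verify that this EPDA meets the hypotheses of Lemma~\ref{lem2}. The array is $2r$-regular by construction, so the regularity condition holds with $g=2r$. Writing $Z=rN/K$ for the number of $\star$'s per column, one checks that $KZ/N=r$, so the relation $r=KZ/N$ demanded by Lemma~\ref{lem2} is satisfied. Finally, each row contains exactly $r$ $\star$'s, which is the remaining hypothesis of the lemma. Lemma~\ref{lem2} therefore applies and certifies that the EPDA is a $(K,N,r,S)$ wireless MapReduce array with the claimed $N$ and $S$.

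To conclude, I would apply Lemma~\ref{lem1} to this wireless MapReduce array, obtaining a wireless MapReduce scheme for $K$ nodes, $Q=K$ output functions, and $N$ files with computation load $r$ and normalized delivery time $L=\frac{1}{\min\{2r,K\}}\left(1-\frac{r}{K}\right)$. Because $r\le K/2$ forces $\min\{2r,K\}=2r$, this collapses to $L=\frac{1}{2r}\left(1-\frac{r}{K}\right)$, matching the asserted value; optimality under one-shot linear schemes then follows immediately from the converse in \cite{LCW}. No genuine obstacle arises: the only care required is in the parameter bookkeeping---confirming that the choice $t=\Gamma=r$ is legal under $r\le K/2$, that the simplified expressions for $N$ and $S$ coincide with those in the statement, and that the row $\star$-count together with the identity $r=KZ/N$ line up so that Lemma~\ref{lem2} is genuinely applicable.
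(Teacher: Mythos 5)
Your proposal is correct and follows exactly the paper's route: the paper likewise obtains the corollary by specializing the EPDA of \cite{YWCQC} to $t=\Gamma=r$ (so that $\gcd(K,t,\Gamma)=\gcd(K,r)$, $N=2Kr/(\text{gcd}(K,r))^2$, $S=N(K-r)/2r$), applying Lemma~\ref{lem2} to identify it as a wireless MapReduce array, and then using Lemma~\ref{lem1} with $\min\{2r,K\}=2r$ together with the converse of \cite{LCW}. Your write-up simply makes the parameter bookkeeping explicit, which the paper leaves implicit.
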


\section{Conclusion}
In this work, we proposed wireless MapReduce distributed computing schemes that achieve optimal normalized delivery time, under one-shot linear schemes, with a smaller requirement on the number of input files than the existing schemes. The schemes are obtained by designing an array that jointly represents the Map, Shuffle, and Reduce phases. We also showed that the array structures known for multi-antenna coded caching schemes (EPDAs) can be used to obtain wireless MapReduce schemes.


\end{document}